\newcommand{\bbA}{\mathbb{A}}
\newcommand{\rem}[1]{}
\newtheorem{proposition}{Proposition}
\newcommand{\bre}{\begin{equation}}
\newcommand{\ere}{\end{equation}}
\newcommand{\ee}\]
\newcommand{\bra}{\begin{eqnarray}}
\newcommand{\era}{\end{eqnarray}}
\newcommand{\bfg}{\begin{figure}[hbtp]}
\newcommand{\efg}{\end{figure}}
\newcommand{\bit}{\begin{itemize}}
\newcommand{\eit}{\end{itemize}}
\newcommand{\ben}{\begin{enumerate}}
\newcommand{\een}{\end{enumerate}}
\newcommand{\bA}{{\bf A}}
\newcommand{\bB}{{\bf B}}
\newcommand{\bG}{{\bf{G}}}
\newcommand{\bF}{{\bf F}}
\newcommand{\bc}{{\bf c}}
\newcommand{\bR}{{\bf R}}
\newcommand{\baa}{\begin{eqnarray*}}
\newcommand{\eaa}{\end{eqnarray*}}
\newcommand{\bs}{{\bf s}}
\newcommand{\br}{{\bf r}}
\newcommand{\ba}{{\bf a}}
\newcommand{\bhh}{{\bf h}}
\newcommand{\bH}{{\bf H}}
\newcommand{\bu}{{\bf u}}
\newcommand{\bd}{{\bf d}}
\newcommand{\by}{{\bf y}}
\newcommand{\cP}{{\cal P}}
\newcommand{\cK}{{\cal K}}
\newcommand{\cC}{{\mathcal{C}}}
\newcommand{\defined}{\triangleq}
\def\defined{\: {\stackrel{\scriptscriptstyle \Delta}{=}} \: }
\newfont{\boldlarge}{msbm10 scaled 1100}
\newcommand{\comment}[1]{}
\newlength{\tmpbigbar}
\definecolor{alizarin}{rgb}{0.82, 0.1, 0.26}
\begin{document}
\title{Efficient Maximum Likelihood Decoding of Polar Codes Over the Binary Erasure Channel}

\author{%
  \IEEEauthorblockN{Yonatan Urman and David Burshtein}\\
  \IEEEauthorblockA{School of Electrical Engineering\\
    Tel-Aviv University\\
    Tel-Aviv 6997801, Israel\\
    Email: yonatanurman@mail.tau.ac.il, burstyn@eng.tau.ac.il}
}

\maketitle

\begin{abstract}
  A new algorithm for efficient exact maximum likelihood decoding of polar codes (which may be CRC augmented), transmitted over the binary erasure channel, is presented.
  The algorithm applies a matrix triangulation process on a sparse polar code parity check matrix,
  followed by solving a small size linear system over GF(2).
  To implement the matrix triangulation, we apply belief propagation decoding type operations.
  We also indicate how this decoder can be implemented in parallel for low latency decoding. Numerical simulations are used to evaluate the performance and computational complexity of the new algorithm.
\end{abstract}

\section{Introduction} \label{sec:intro}
The error rate performance of a polar code \cite{PolarCodes} with a short to moderate blocklength can be significantly improved by concatenating it with a high rate
cyclic redundancy check (CRC) code, and using a CRC-aided successive cancellation (SC) list (SCL) decoder~\cite{SCL}.
However, both the SC and SCL decoders are sequential and thus suffer from high decoding latency and limited throughput.
Improvements to SC and SCL were proposed, e.g., in \cite{alamdar2011simplified,leroux2013semi,sarkis2014fast,li2014low,balatsoukas2015llr,yuan2015low,xiong2015symbol,chen2016reduce,hashemi2018decoder,hashemi2018decoding,giard2018fast,hashemi2019rate}.
An iterative belief propagation (BP) decoder over the polar code factor graph (FG) was proposed in \cite{Arkan2010PolarC, polar_vs_reed}. This decoder is inherently parallel, and allows for efficient, high throughput implementation, optimizations and extensions~\cite{eslami2010on,BP_arc,BP_BEC,Polar_LDPC_conc,bp_early_term,crc_early_term,PCForChannelSrc,PolarBPCRCWarren,BPPermuted,BPL,BPPermutedWarren,yu2019belief,PolarBPCRCBrink}. However, even the CRC aided BP list (BPL) decoder that uses several parallel decoders, one for each permuted FG representation of the polar code, and also incorporates CRC information in the BP decoding, has higher error rate compared to the CRC-aided SCL decoder \cite{PolarBPCRCBrink}.

In this paper, we consider the problem of decoding polar codes (possibly concatenated with a CRC code) over the binary erasure channel (BEC).
We derive a new low complexity algorithm for computing the exact maximum-likelihood (ML) codeword based on inactivation decoding \cite{pishro2004on,LDPC_ML,shokrollahi2006systems,eslami2010on,cocskun2020successive} (see also \cite[Chapter 2.6]{algebraic_coding_theory}).
In \cite{eslami2010on}, it was proposed  to use Algorithm C of \cite{pishro2004on} for improved BP decoding of polar codes. The differences between our work and \cite{eslami2010on} are as follows. 
First, rather than using the large standard polar code FG, we use the method in \cite{SparseGraphsBPPolar} for constructing (offline) a much smaller sparse PCM for polar codes, thus reducing the computational complexity.
In addition, we show how to extend our method to CRC-augmented polar codes, which are important in practice. We also analyze the computational complexity of our method, and indicate how the algorithm can be implemented in parallel.

\section{Background} \label{sec:background}
We use the following notations: We denote by $\cP\left(N, K\right)$ a polar code of blocklength $N$, information size $K$, and rate $K/N$. The number of stages in the polar code FG is $n=\log_2 N$. %We use bold-face capital letters to denote matrices and bold-face lower case letters to denote vectors. 
The information and frozen index sets are denoted by $\bbA$ and $\bar{\bbA}$ respectively. We denote by $\tilde{\bu}$, the information bits vector of length $K$, and by $\bu$ the information word, such that $\bu_{\bbA} = \tilde{\bu}$ and $\bu_{\bar{\bbA}} = 0$. That is, $\bu$ is the full input (column) vector, including the frozen bits, and we assume that the frozen bits are set to zero. We denote the $N\times N$ polar generator matrix by $\bG_{N} = \bB_{N}\bF^{\otimes n}$ \cite{PolarCodes}, where $\bB_{N}$ is the bit reversal permutation matrix, and
$
  \bF=\begin{bmatrix}1 & 0\\1 & 1\end{bmatrix}
$.
The codeword is then generated using $\bc^T=\bu^T\bG_{N}$.
The polar code FG is shown in Fig. \ref{fig:PolarFG}.
\begin{figure}
  \centering
  \includegraphics[width=0.6\linewidth]{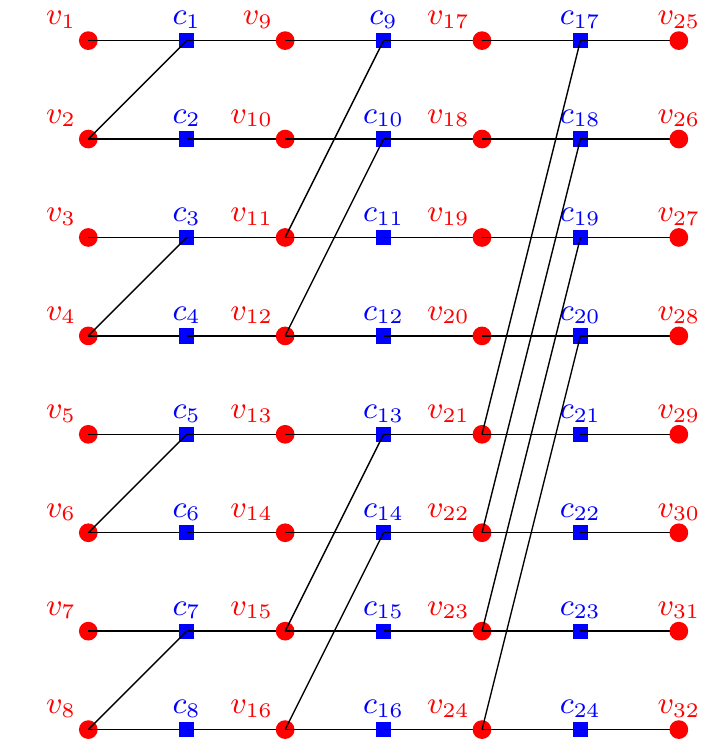}
  \caption{Polar code factor graph for n=3.}
  \label{fig:PolarFG}
\end{figure}
It can be seen that this FG consists of $n$ stages of parity check (PC) nodes, and $n+1$ stages of variable nodes. The variable nodes in the leftmost stage of the graph (denoted in Fig. \ref{fig:PolarFG} by red numbers $1-8$) correspond to the information word $\bu$, and the nodes on the rightmost stage of the graph correspond to the codeword $\bc$ (marked by red numbers $25-32$).
There are three types of variable nodes in the FG:
Channel variable nodes (CVN), corresponding to the codeword,
frozen variable nodes (FVN), and the rest are hidden variable nodes (HVN).

\subsection{Sparse parity check matrix for polar codes \cite{SparseGraphsBPPolar}}
%The standard polar code PCM, $\bH$, can be constructed from the generator matrix, $\bG_N$, by taking the columns that correspond to the frozen set, $\bar{\bbA}$ \cite[Lemma 1]{LP_Polar_decoding}.
The standard polar code PCM \cite[Lemma 1]{LP_Polar_decoding} is dense. Hence, it is not suitable for standard BP decoding.
For completeness, we briefly review the method in \cite{SparseGraphsBPPolar} for obtaining a sparse representation of the polar code PCM. It starts with the standard polar FG, which can be represented as a PCM of size $N\log_2 N\times N(1+\log_2 N)$, i.e., $N\log_2 N$ PC nodes, and $N(1+\log_2 N)$ variable nodes (out of which, $N-K$ variable nodes are frozen). The leftmost layer of variable nodes of the graph corresponds to columns $1$ to $N$ of the PCM, the next layer of variable nodes corresponds to columns $N+1$ to $2N$, and so on, such that the rightmost variable nodes (codeword) correspond to the last $N$ columns of the PCM. In Fig. \ref{fig:PolarFG}, the column index of variable node $v_i$ is $i$. The resulting PCM is sparse since each PC is connected to at most $3$ variable nodes, as shown in Fig. \ref{fig:PolarFG}. Hence, standard BP can be used effectively on this graph with expected good error probability performance. Unfortunately, the dimensions of the resulting matrix are large (instead of $(N-K)\times N$ for the standard polar code PCM, we now have a matrix of size $N\log_2 N \times N(1+\log_2 N)$ as it contains  HVNs as well), which increases the decoding complexity. To reduce the matrix size, the authors of \cite{SparseGraphsBPPolar} suggested the following pruning steps that yield a valid sparse PCM to the code while reducing the size of the $N\log_2 N\times N(1+\log_2 N)$ original PCM significantly.
%\begin{enumerate}[wide, labelwidth=!, labelindent=0pt]
\begin{enumerate}[wide, labelwidth=!, labelindent=0pt,label=\textbf{\arabic*})]
  \item \textbf{FVN removal}: If a variable is frozen then it is equal to zero. Thus, all columns that correspond to frozen nodes can be removed from the PCM.
  \item \textbf{Check nodes of degree 1}: The standard polar FG contains check nodes of degrees 2 and 3 only. However, after applying the pruning algorithm, we might get check nodes with degree 1. Their neighboring variable node must be 0. Thus, this check node and its single neighbor variable node can be removed from the FG (the column corresponding to that variable node is removed from the PCM).
  \item \textbf{A CVN connected to a degree 2 check node with an HVN}: The CVN must be equal to the HVN. Thus, the connecting check node can be removed, and the HVN can be replaced (merged) with the CVN.
  \item \textbf{HVN of degree 1}: This HVN does not contribute to the decoding of the other variables since it is $(0,1)$ with probabilities $(1/2,1/2)$. Hence, the variable node and the check node it is connected to can be removed from the graph.
  \item \textbf{HVN of degree 2}: This HVN can be removed and the two connected check nodes can be merged.
  \item \textbf{Degree 2 check node that is connected to two HVNs}: The two HVNs must be equal. Hence, the check node can be removed and the two HVNs can be merged.
\end{enumerate}
Iterating over the above mentioned pruning steps until convergence results in the pruned PCM (FG), which is a valid PCM for the code.
As mentioned before, the last $N$ variable nodes of the resulting PCM correspond to the codeword bits (CVNs), while the rest are HVNs. Standard BP decoding can be used on the new FG.
Note that the pruned PCM has full row rank. This is due to the fact that after the FVN removal step, the dimensions of the PCM are $N\log_2 N \times (N\log_2 N + K)$. Then, for each variable node removed from the graph while executing the other pruning steps summarized above, exactly one check node is removed as well. Hence, at the end of the process, the pruned PCM has dimensions $(N'-K) \times N'$, where $N'\ge N$ is the total number of variable nodes in the pruned graph. Now, $K$ is the dimension of the polar code. Hence the $N'-K$ rows of the pruned PCM must be linearly independent. As an example, for $\cP\left(256, 134\right)$ ($\cP\left(512, 262\right)$, respectively), the algorithm yields a pruned PCM with blocklengh $N'=355$ ($N'=773$) and the fraction of ones in the PCM is $0.7\%$ ($0.33\%$).

\section{Efficient ML Decoding over the BEC} \label{sec:EffML}
Consider the BEC, where a received symbol is either completely known or completely unknown (erased). We denote the erasure probability by $\epsilon$.
The channel capacity is $C(\epsilon) = 1-\epsilon$ \cite{cover_book}.
Denote the input codeword by $\bc$ and the BEC output by $\by$. Since $\bc$ is a codeword it must satisfy $\bH\bc = \mathbf{0}$ where $\bH$ is a PCM of the code.
Denote by $\cK$ the set of known bits in $\bc$ (available from $\by$), and by $\bar{\cK}$ the set of erasures. Denote by $\bH_{\cK}$ ($\bH_{\bar{\cK}}$, respectively) the matrix $\bH$ restricted to columns in $\cK$ ($\bar{\cK}$).
We have, $\mathbf{0} = \bH\bc = \bH_{\cK}\bc_{\cK} + \bH_{\bar{\cK}}\bc_{\bar{\cK}}$. Thus we have the following set of linear equations,
\begin{equation}\label{eq:MLDecBEC}
  \bH_{\bar{\cK}}\bc_{\bar{\cK}} = \bH_{\cK}\bc_{\cK} \: .
\end{equation}
As a result, it can be seen that ML decoding over the BEC is equivalent to solving the set \eqref{eq:MLDecBEC} for $\bc_{\bar{\cK}}$ \cite{modern_coding_theory}. However, the required complexity when using Gaussian elimination is $\mathcal{O}(N^3)$. We now present a much more efficient ML decoding algorithm. This algorithm uses the sparse representation of the polar PCM \cite{SparseGraphsBPPolar} that was reviewed above. This PCM is obtained offline.
Our algorithm is a modified version of the efficient ML decoder \cite{LDPC_ML}, that was proposed for efficient ML decoding of LDPC codes over the BEC.

It may be convenient, for computational efficiency, to store the sparse pruned PCM by the locations of ones at each row and column (along with the number of ones).
It may also be convenient to address the rows and columns using some permutation.
For clarity, we describe the algorithm over the pruned PCM. However, for efficient implementation, some other representation may be preferable (e.g., using the FG for stage 1 described below).
The algorithm has the following stages:
\begin{enumerate}[wide, labelwidth=!, labelindent=0pt,label=\textbf{\arabic*})]
  \item \textbf{Standard BP decoding}. Given the BEC output, apply standard BP decoding on the pruned PCM until convergence.
        If BP decoding was successful (all variable nodes were decoded), we return the decoded word and exit. Otherwise, we proceed to the next decoding stage.
        The PCM after BP decoding is shown in Fig. \ref{fig:PCMAfterBp}.
        In the end of the decoding we have permuted the rows and columns of the PCM such that the first $n_d$ columns correspond to the decoded variable nodes, and the first $n_c$ rows correspond to the decoded PCs (a PC node is said to be decoded if all its neighbor variable nodes have been decoded).
        \begin{figure}
          \centering
          \hspace{-2.7em}
          \includegraphics[width=.75\linewidth]{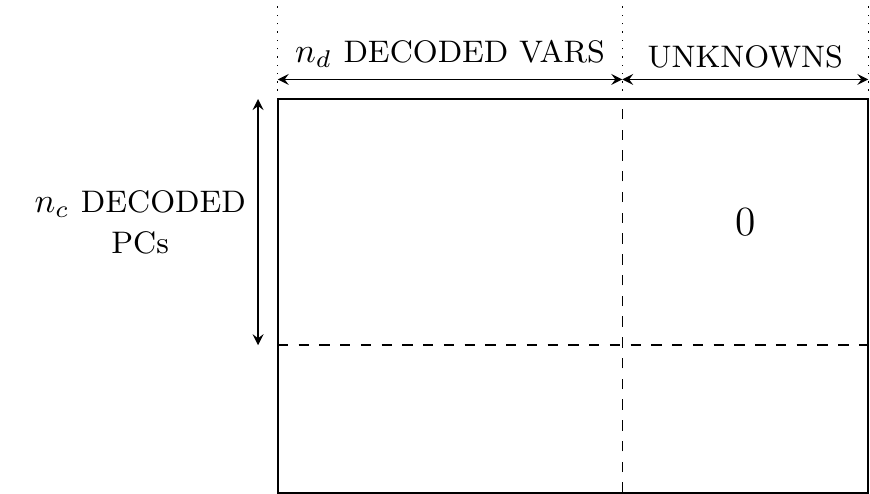}
          \caption{Reordered PCM after initial BP decoding.}
          \label{fig:PCMAfterBp}
        \end{figure}

  \item \textbf{Choosing reference variables and performing triangulation \cite{EffEncLDPC}}. Consider the PCM at the output of the previous stage, shown in Fig. \ref{fig:PCMAfterBp}. Since the BP decoder has converged on this PCM, the number of unknown variable nodes in each undecoded row is larger than one (otherwise the BP could have continued decoding). The goal of the current stage is to bring the PCM to the form shown in Fig. \ref{fig:FinalPCM}, using only row and column permutations, where the $n_c \times n_d$ sub-matrix on the top-left corner is the same as in Fig. \ref{fig:PCMAfterBp}, and where the sub-matrix $\bH^{(1,3)}$ is square lower triangular with ones on its diagonal.
        \begin{figure}
          \centering
          \hspace{-3.7em}
          \includegraphics[width=0.85\linewidth]{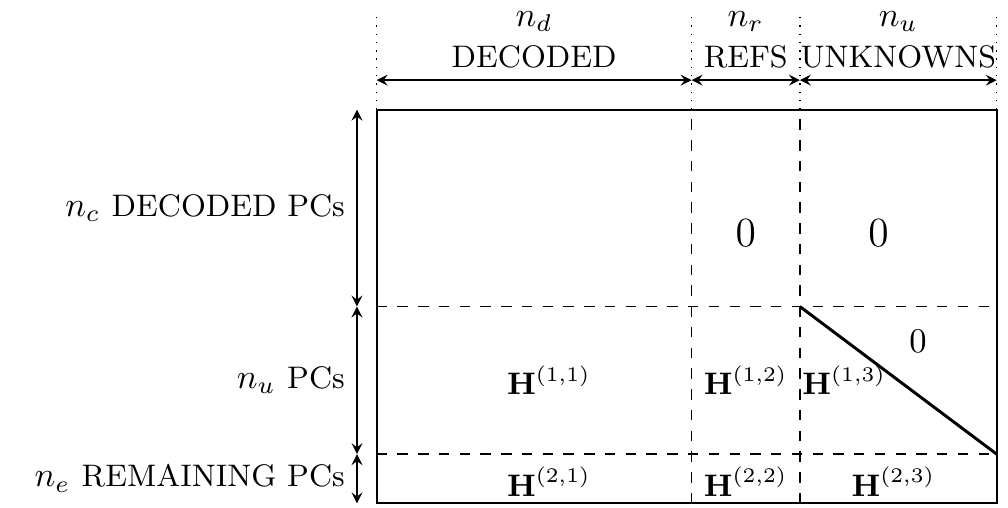}
          \caption{Final PCM. The bold diagonal is filled with ones.}
          \label{fig:FinalPCM}
        \end{figure}

        We start by considering the reordered PCM in Fig. \ref{fig:PCMAfterBp}. We mark $n'_r$ unknown variable nodes (variables that have not been decoded by the BP) as \emph{reference variables} (either by picking them at random or by using a more educated approach as discussed below) and remove them from the list of unknowns. We then permute the matrix columns so that the $n'_r$ columns corresponding to these reference variables are placed immediately after the columns corresponding to the $n_d$ decoded variables. We then perform a \emph{diagonal extension step} \cite{EffEncLDPC} on this column permuted PCM. This means that we check for rows with a single unknown variable node in the remaining columns (those with indices larger than $n_d+n'_r$).
        Assume we found $l_1$ such rows, and the locations of ones in these rows are ${(r_1, c_1), (r_2, c_2), \ldots , (r_{l_1}, c_{l_1})}$. Then, for every $i\in \{1, \ldots, l_1\}$, we permute row $r_{i}$ with row $n_c+i$, and column $c_{i}$ with column $n_d+n'_r+i$.
        The result is shown in Fig. \ref{fig:diag_exten} with $l=l_1$, $\bR=\mathbf{0}$ and $n'_r$ reference variables.
        \begin{figure}
          \centering
          \hspace{-4em}
          \includegraphics[width=0.75\linewidth]{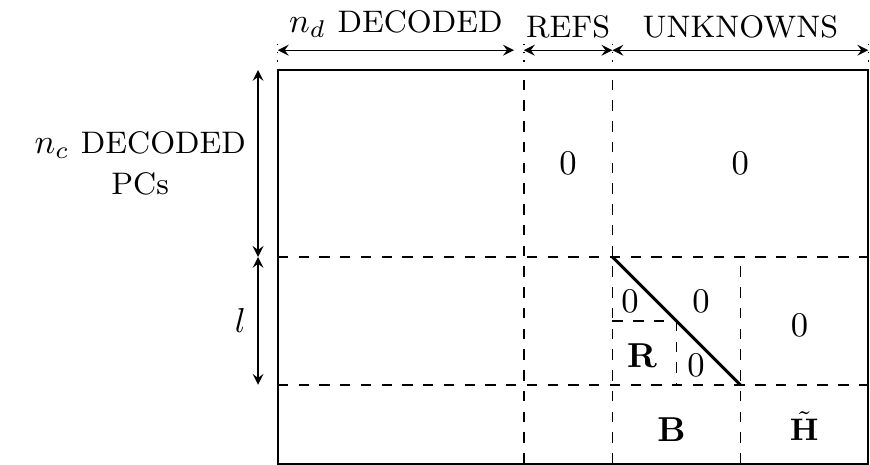}
          \caption{Diagonal Extension. The bold diagonal is filled with ones.}
          \label{fig:diag_exten}
        \end{figure}
        We proceed by applying additional diagonal extension steps on the rows of the obtained matrix below the $n_c+l_1$ first ones ($\tilde{\bH}$ in Fig. \ref{fig:diag_exten} with $l=l_1$) repeatedly, until this is not possible anymore (no more rows with a single unknown variable node that has not been diagonalized yet). By doing so we extend the diagonal of size $l=l_1$ that we have already constructed. Denote the total number of found rows with a single unknown variable node by $l_2 \ge l_1$. For example, the resulting PCM for two diagonal extension steps is also shown in Fig. \ref{fig:diag_exten} but now with $l=l_2$ and some $(l_2-l_1) \times l_1$ matrix $\bR$.
        Essentially, after we have chosen $n'_{r}$ reference variable nodes, we have applied a BP decoding procedure that also incorporates row and column permutations until convergence of the BP decoder in order to obtain the matrix shown in Fig. \ref{fig:diag_exten} with the largest possible $l=l_2$. We now repeat this process by an iterative application of the following basic procedure:
        \begin{enumerate}[wide, labelwidth=!, labelindent=0pt]
          \item Choose $n'_r$ additional reference variables from the unknwon variables that have not been diagonalized yet.
          \item Permute the PCM columns so that the columns corresponding to these additional reference variables are placed just after the columns of the reference variables from previous applications of the procedure.
          \item Apply as many diagonal extension steps as possible.
        \end{enumerate}

  \item \textbf{Expressing unknown variables as a combination of reference variables}. When we enter this stage the PCM has the form shown in Fig. \ref{fig:FinalPCM}. It contains $n_d$ BP decoded variables with known value, $\bd$, $n_r$ reference variables and $n_u$ remaining unknown variables. Both the values of the reference variables, $\br$, and the unknown variables, $\bu$, are unknown at this point. In this stage we express $\bu$ as an affine transformation of $\br$ (over GF(2)),
        \begin{equation}
          \bu=\bA\br + \ba
          \label{eq:u_Ar_b}
        \end{equation}
        where $\bA$ is a matrix of size $n_u \times n_r$, and $\ba$ is a vector of length $n_u$. Due to the triangulation and the sparsity of the PCM, this step can be computed efficiently using back-substitution as follows. Using the matrix form in Fig. \ref{fig:FinalPCM}, we have
        \begin{equation}
          \left( \begin{matrix}
              \bs^{(1)} \\ \bs^{(2)}
            \end{matrix} \right)
          =
          \left( \begin{matrix}
              \bH^{(1,2)} & \bH^{(1,3)} \\
              \bH^{(2,2)} & \bH^{(2,3)}
            \end{matrix} \right)
          \left( \begin{matrix}
              \br \\ \bu
            \end{matrix} \right)
          \label{eq:mat_eq_s1_s2}
        \end{equation}
        where $\bs^{(1)} = \bH^{(1,1)} \bd$ and $\bs^{(2)} = \bH^{(2,1)} \bd$.
        Since $\bH^{(1,3)}$ is a lower triangular, $n_u \times n_u$ matrix, with ones on its diagonal, we thus have for $l=1,2,...,n_u$,
        \begin{equation}
          u_l = s_l^{(1)} + \sum_j H_{l,j}^{(1,2)} r_j + \sum_{j<l} H_{l,j}^{(1,3)} u_j
          \label{eq:u_l}
        \end{equation}
        Suppose that we have already expressed $u_i$ as
        \begin{equation}
          u_i = \sum_j A_{i,j} r_j + a_i
          \label{eq:u_i}
        \end{equation}
        for $i=1,\ldots,k$, and wish to obtain a similar relation for $i=k+1$. Then, by \eqref{eq:u_l},
        \begin{equation}
          u_{k+1} = s_{k+1}^{(1)} + \sum_j H_{k+1,j}^{(1,2)} r_j + \sum_{i\in\cC_k} u_i
        \end{equation}
        where $\cC_k \defined \left\{ i \: : \: i\le k, H_{k+1,i}^{(1,3)}=1 \right\}$. Substituting \eqref{eq:u_i} for $u_i$ in the last summation and rearranging terms yields,
        \begin{equation}
          u_{k+1} = \sum_j A_{k+1,j} r_j + a_{k+1}
          \label{eq:u_k1}
        \end{equation}
        where
        \begin{equation}
          A_{k+1,j} = H_{k+1,j}^{(1,2)} + \sum_{i\in\cC_k} A_{i,j} , \quad
          a_{k+1} = s_{k+1}^{(1)} + \sum_{i\in\cC_k} a_{i}
        \end{equation}
        This shows that the rows of $\bA$ as well as the elements of the vector $\ba=(a_1,\ldots,a_{n_u})^T$ can be constructed recursively for $k=1,2,..$ as follows. Denote the $k$'th row of $\bA$ by $\overline{\ba}_k$ and the $k$'th row of $\bH^{(1,2)}$ by $\overline{{\bhh}}^{(1,2)}_k$. Then we first initialize by
        \begin{equation}
          \overline{\ba}_1 = \overline{\bhh}_1^{(1,2)}, \quad a_1 = s_1^{(1)} \label{eq:a_init}
        \end{equation}
        Then, for $k=1,\ldots,n_u-1$,
        \begin{equation}
          \overline{\ba}_{k+1} = \overline{\bhh}_{k+1}^{(1,2)} + \sum_{i\in\cC_k} \overline{\ba}_{i}
          , \quad
          a_{k+1} = s_{k+1}^{(1)} + \sum_{i\in\cC_k} a_{i} \label{eq:a_recursion}
        \end{equation}

  \item \textbf{Finding the values of the reference and unknown variables}. By \eqref{eq:mat_eq_s1_s2} and \eqref{eq:u_Ar_b},
        $$
          \bs^{(2)} = \bH^{(2,2)} \br + \bH^{(2,3)} \bu =
          \left( \bH^{(2,2)} + \bH^{(2,3)} \bA \right)\br + \bH^{(2,3)} \ba
        $$
        where, as was indicated above, $\bs^{(2)} = \bH^{(2,1)} \bd$. Hence, $\br$ is obtained by solving the following linear equation (over GF(2)),
        \begin{equation}
          \left( \bH^{(2,2)} + \bH^{(2,3)} \bA \right)\br = \bs^{(2)} + \bH^{(2,3)} \ba
          \label{eq:linear_system_Ar_b}
        \end{equation}
        To solve this system we apply Gaussian elimination on the $n_e \times (n_r+1)$ augmented matrix of \eqref{eq:linear_system_Ar_b}.
        This system must have a valid solution (the true transmitted codeword). Decoding will be successful if the true transmitted codeword is the unique solution.
        After we have obtained $\br$, we can also obtain $\bu$ from \eqref{eq:u_Ar_b}. Thus we have obtained the decoded codeword.
\end{enumerate}

\subsection{Complexity} \label{sec:EffMLComp}
Recall that the pruned PCM is obtained offline.
We now analyze the complexity of each stage of the efficient ML decoding algorithm described above in terms of the number of XORs (additions over GF(2)).
\begin{enumerate}[wide, labelwidth=!, labelindent=0pt,label=\textbf{\arabic*})]
  \item
        The complexity of BP decoding over the BEC is determined by the number of edges in the Tanner graph \cite{modern_coding_theory}. We start with a PCM with $\mathcal{O}(N\log_2 N)$ edges (corresponding to the FG in Fig. \ref{fig:PolarFG}). Then, after applying the pruning, the total number of edges decreases. Hence, the complexity of BP decoding over the pruned PCM is $\mathcal{O}(N\log_2 N)$.
  \item A diagonal extension step is equivalent to a BP iteration over the BEC without XORs.
        The total number of permutations is $\mathcal{O}(N\log_2 N)$.
  \item The number of XORs required to compute $\bs^{(1)} = \bH^{(1,1)} \bd$ is the number of ones in $\bH^{(1,1)}$. The recursion described by \eqref{eq:a_init} and \eqref{eq:a_recursion} requires $(n_r+1)(\gamma-n_u)$ XORs where $n_r$ is the number of reference variables and $\gamma$ is the number of ones in $\bH^{(1,3)}$. Denote by $d_c^{(1)}$ the average number of ones in a row of the final PCM in Fig. \ref{fig:FinalPCM} corresponding to the $\bH^{(1,\cdot)}$ matrices. Since the final PCM is sparse (it was obtained from the pruned PCM by row and column permutations only), $d_c^{(1)}$ is small. The total computational cost of this stage is $\mathcal{O}(d_c^{(1)} \cdot (n_r+1) \cdot n_u)$. We can also say that the total computational cost of this stage is $\mathcal{O}(d_c^{(1)} \cdot (n_r+1) \cdot N \log_2 N)$ (since $n_u=\mathcal{O}(N \log_2 N)$).
  \item The number of XORs required to compute $\bs^{(2)} = \bH^{(2,1)} \bd$ is the number of ones in $\bH^{(2,1)}$. We then compute $\bH^{(2,2)} + \bH^{(2,3)} \bA$ and $\bs^{(2)} + \bH^{(2,3)} \ba$, required in \eqref{eq:linear_system_Ar_b}, the complexity is $\mathcal{O}(\rho (n_r+1))$, where $\rho$ is the number of ones in $\bH^{(2,3)}$. Let $\rho = d_c^{(2)} n_e$, where $n_e$ is the number of remaining PCs and $d_c^{(2)}$ is the average number of ones in the $n_e$ remaining PC rows of the final PCM. Hence, the above complexity is $\mathcal{O}(d_c^{(2)} \cdot (n_r+1) \cdot n_e)$.
        Finally, the complexity of the Gaussian elimination required to solve the linear system \eqref{eq:linear_system_Ar_b} is $\mathcal{O}(n_e\cdot n_r^2)$.
        After we have obtained $\br$, we use \eqref{eq:u_Ar_b} to obtain $\bu$, the complexity of this additional step is $\mathcal{O}(n_u\cdot n_r)$.
\end{enumerate}

As can be seen, the complexity of the algorithm will be strongly influenced by $n_r$ and $n_e$.
We argue the following.
\begin{proposition} \label{prop1}
  When the code rate is below channel capacity, i.e., $R<C(\epsilon)=1-\epsilon$, we have,
  for any $q, p > 0$,
  \bre
  \lim_{N\rightarrow \infty} {\rm E} \{\left(N \log N\right)n_r^{q} \cdot n_e^{p}\} = 0
  \label{eq:prop_expected_nr}
  \ere
\end{proposition}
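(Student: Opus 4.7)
The plan is to reduce the statement to a super-polynomial decay bound on the probability that Stage 1 (standard BP on the pruned PCM) fails. Observe that $n_r = n_e = 0$ deterministically on the event that Stage 1 decodes every variable node, because in that case the algorithm returns immediately and no reference variables are introduced. Let $P_{\rm BP}(N)$ denote the probability that Stage 1 fails. Since the pruned PCM has $N' = \mathcal{O}(N\log N)$ variable nodes, the trivial deterministic bounds $n_r, n_e \le N'$ give
\[
{\rm E}\{(N\log N)\, n_r^q n_e^p\} \le \mathcal{O}\!\left((N\log N)^{q+p+1}\right) \cdot P_{\rm BP}(N),
\]
so it suffices to show that $P_{\rm BP}(N)$ decays faster than any polynomial in $N$.

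For this I would chain two known facts. First, BP on the pruned PCM is at least as powerful as BP on the unpruned polar factor graph: each of the six pruning rules of Section II merely deletes or merges variables whose values are algebraically forced by the code constraints, so none can hide information available to a peeling decoder on the BEC. Second, BP on the polar FG over the BEC is at least as powerful as SC decoding, since BP acts as a peeling decoder with simultaneous access to the entire system of parity checks while SC only sequentially uses a subset. Combining, $P_{\rm BP}(N) \le P_{\rm SC}(N)$, the block erasure probability of SC decoding of the polar code.

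The proof is then closed by invoking the polar-code scaling law: when $R < 1-\epsilon$, for any $\beta < 1/2$ there exists $N_0$ such that $P_{\rm SC}(N) \le 2^{-N^{\beta}}$ for all $N \ge N_0$. Since $(N\log N)^{q+p+1} \cdot 2^{-N^{\beta}} \to 0$, the limit in the proposition follows.

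The main obstacle is the first of the two chained comparisons, namely verifying that pruning does not weaken BP. This requires a case-by-case check of the pruning rules in Section II, showing that any variable whose value a BP peeling decoder on the original FG can determine is also determined by BP on the pruned PCM (equivalently, that each pruning step commutes with, or can be reproduced by, a sequence of peeling updates). The remaining ingredients --- the crude bound on $n_r,n_e$, the SC-to-BP comparison on the polar FG, and the polar-code scaling law --- are standard and short.
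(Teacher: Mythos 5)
Your proposal follows essentially the same route as the paper: bound $n_r,n_e$ by $N\log N$ on the BP-failure event, reduce to showing the Stage-1 failure probability is at most the SC block error probability via the chain (BP on pruned PCM) $\ge$ (BP on polar FG) $\ge$ SC, and invoke the rate-of-polarization bound $2^{-N^{\beta}}$. The one step you flag as the main obstacle---that pruning does not weaken BP---is handled in the paper not by a case-by-case check but by observing that over the BEC the BP fixed point is invariant to the message-passing schedule, so BP on the pruned graph is in fact \emph{equivalent} to BP on the original factor graph.
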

\begin{proof}
  By \cite[Lemma 6]{PCForChannelSrc}, when decoding a polar code transmitted over the BEC, BP on the standard polar FG cannot perform worse than SC. In addition, applying the BP decoder on the standard polar FG is equivalent to its application on the pruned graph (although the pruning steps have changed the message passing schedule, in the BEC case, the result of BP is invariant to the scheduling used). Now, the block error probability when using SC decoding is bounded by $P_e \leq 2^{-N^{\beta}}$, for any $\beta<1/2$, \cite{arikan2009on}. Hence, the error probability of the BP algorithm applied in the first stage of our proposed decoder is also bounded by the same term. Now, whenever BP fails to decode all bits, $n_r,n_e$ are bounded by $N \log N$. Thus, ${\rm E} \{\left(N \log N\right)n_r^q \cdot n_e^p\} \leq \left(N \log N\right)^{1 + p + q}\cdot 2^{-N^{\beta}}$. This immediately yields \eqref{eq:prop_expected_nr}.
\end{proof}

Now, the PCM pruning algorithm \cite{SparseGraphsBPPolar} can be modified such that the maximum degree of each PC node is at most some constant $d$. This modification is not required in practice, but with this modification, Proposition \ref{prop1} implies that the average computational complexity of the algorithm is $\mathcal{O}(N\log N)$. 
This complexity should be contrasted with that of straight-forward ML decoding \eqref{eq:MLDecBEC} which is $\mathcal{O}(N^3)$ ($N \cdot\epsilon$ variables where $\epsilon$ is the erasure probability and $N-K$ equations).

\subsection{CRC-Polar concatenation} \label{sec:AddCRC}
One approach for incorporating CRC in the proposed algorithm is to add the additional PC constraints associated with the CRC to the polar PCM before applying the pruning procedure. Unfortunately, the PCM of the CRC is not sparse, and this degrades the effectiveness of the pruning and results in a larger matrix.
As an example, for $\cP\left(256, 134\right)$ ($\cP\left(512, 262\right)$, respectively) with CRC of length $6$, the pruned PCM has blocklengh $N'=533$ ($N'=1150$), compared to $N'=355$ ($N'=773$) for a plain polar PCM.

Therefore, we used an alternative approach, where we add the additional CRC constraints after the pruning process. Since the CRC is a high rate code, we only add a small number of equations.
First, we need to obtain the CRC constraints in terms of the codeword (last $N$ columns of the pruned polar PCM).
Denote by $\bH_{CRC}$ the PCM of the CRC, i.e., for every information word $\bu$ (including the frozen bits), we have $\bH_{CRC}\cdot \bu = \mathbf{0}$. It is known from \cite{LP_Polar_decoding} that $\bu^T=\bc^T\cdot \bG_N$ where $\bG_N$ is the $N \times N$ generator matrix of the polar code. Hence, the CRC constraints can be expressed in terms of the codeword as $\mathbf{0}=\bH_{CRC}\cdot \bu = \bH_{CRC}\cdot \bG_N^T\cdot \bc$. That is, the CRC constraints that we add to the pruned polar PCM constraints are $\bH_{CRC}\bG_N^T$.
In order to decrease the density (number of ones) of $\bH_{CRC}\bG_N^T$, we used the greedy algorithm proposed in \cite{PolarBPCRCBrink}. This algorithm considers the Hamming weight of the sum of every two rows, $i$ and $j$. If this sum is smaller than that of either row $i$ or row $j$, then the algorithm replaces the row (either $i$ or $j$) with larger Hamming weight, with this sum.
\section{Simulation results}\label{sec:sim_res}
We used the standard concatenated CRC-Polar scheme with a total rate of $1/2$ and CRC of length $6$. Fig. \ref{fig:BEC_res} shows the BER / FER performance and the mean number of reference variables, $n_r$, and remaining equations, $n_e$ (when the codeword was successfully decoded by the BP algorithm, which is applied in the first stage, $n_r=n_e=0$). Whenever the diagonal cannot be further extended, we choose a single reference variable, that is $n'_r=1$. We considered two methods for selecting the reference variable. The first chooses the reference variable at random from all remaining unknown CVNs. The second, following \cite[Method C]{LDPC_ML}, chooses the unknown variable from a PC with the smallest number of remaining unknown variables. The second approach was slightly better and is hence the one used in Fig. \ref{fig:BEC_res}.
Since our decoder is the ML decoder, it achieves the best FER performance compared to all other decoders. As can be seen in Fig. \ref{fig:BEC_res}, our ML decoder also achieves the best BER performance. We can also see that the number of required reference variables, $n_r$, and the number of remaining equations, $n_e$, are small. For example, for $N=512$ and $\epsilon \le 0.37$ the average number of reference variables is less than $0.1\%$ of the code length.
\begin{figure}
  \centering
  \begin{subfigure}[b]{.49\linewidth}
    \includegraphics[width=\linewidth]{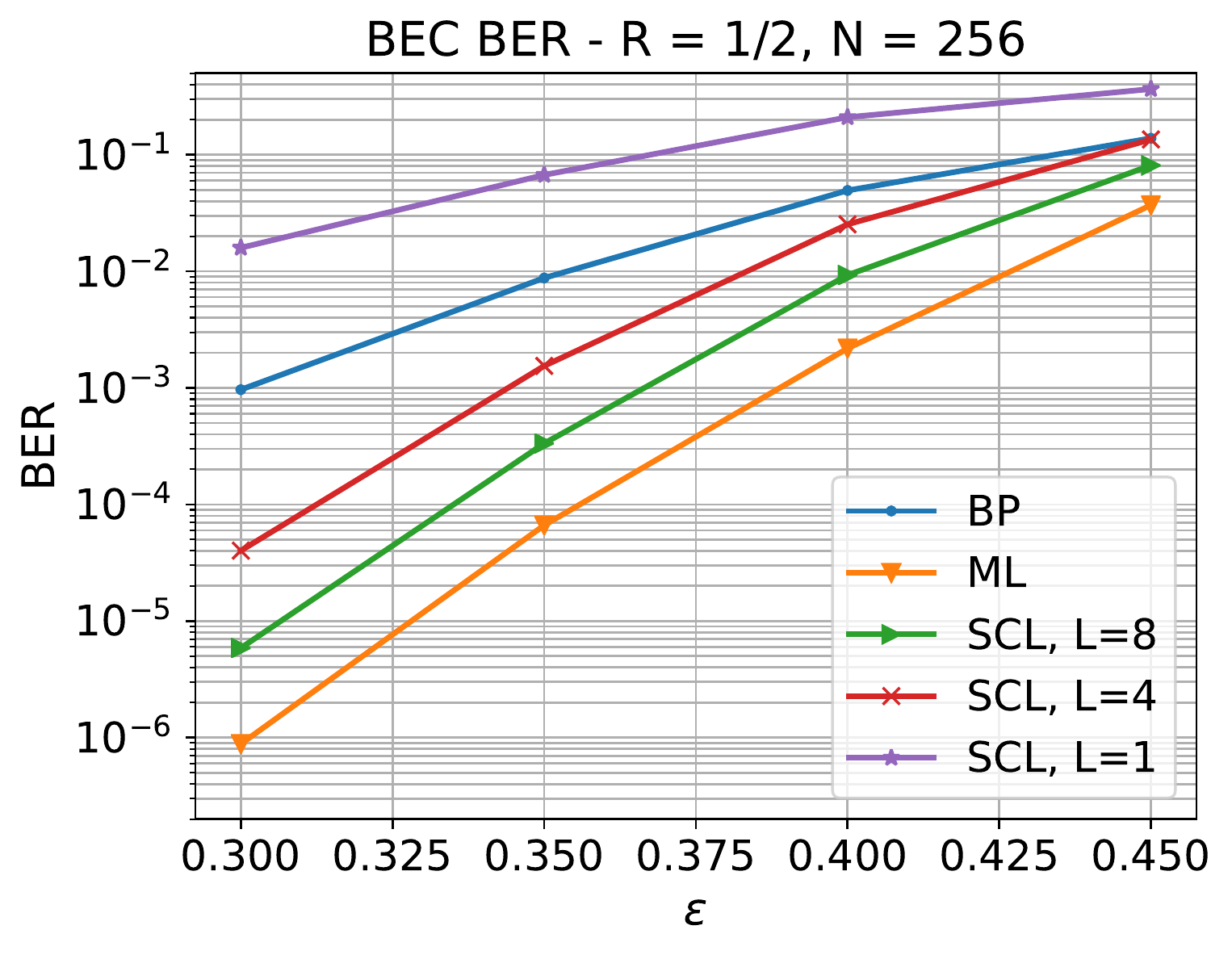}
  \end{subfigure}
  \begin{subfigure}[b]{.49\linewidth}
    \includegraphics[width=\linewidth]{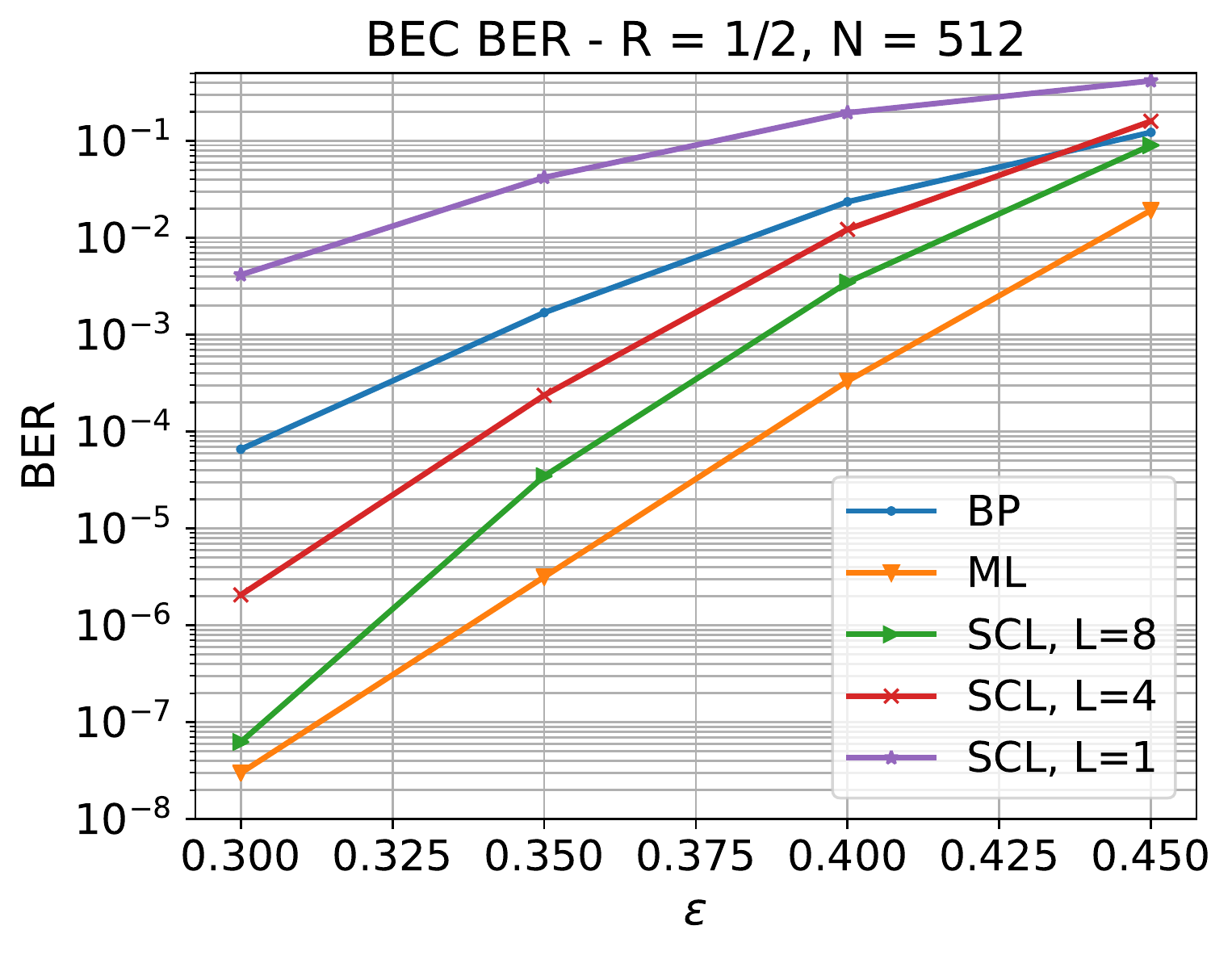}
  \end{subfigure}

  \begin{subfigure}[b]{.49\linewidth}
    \includegraphics[width=\linewidth]{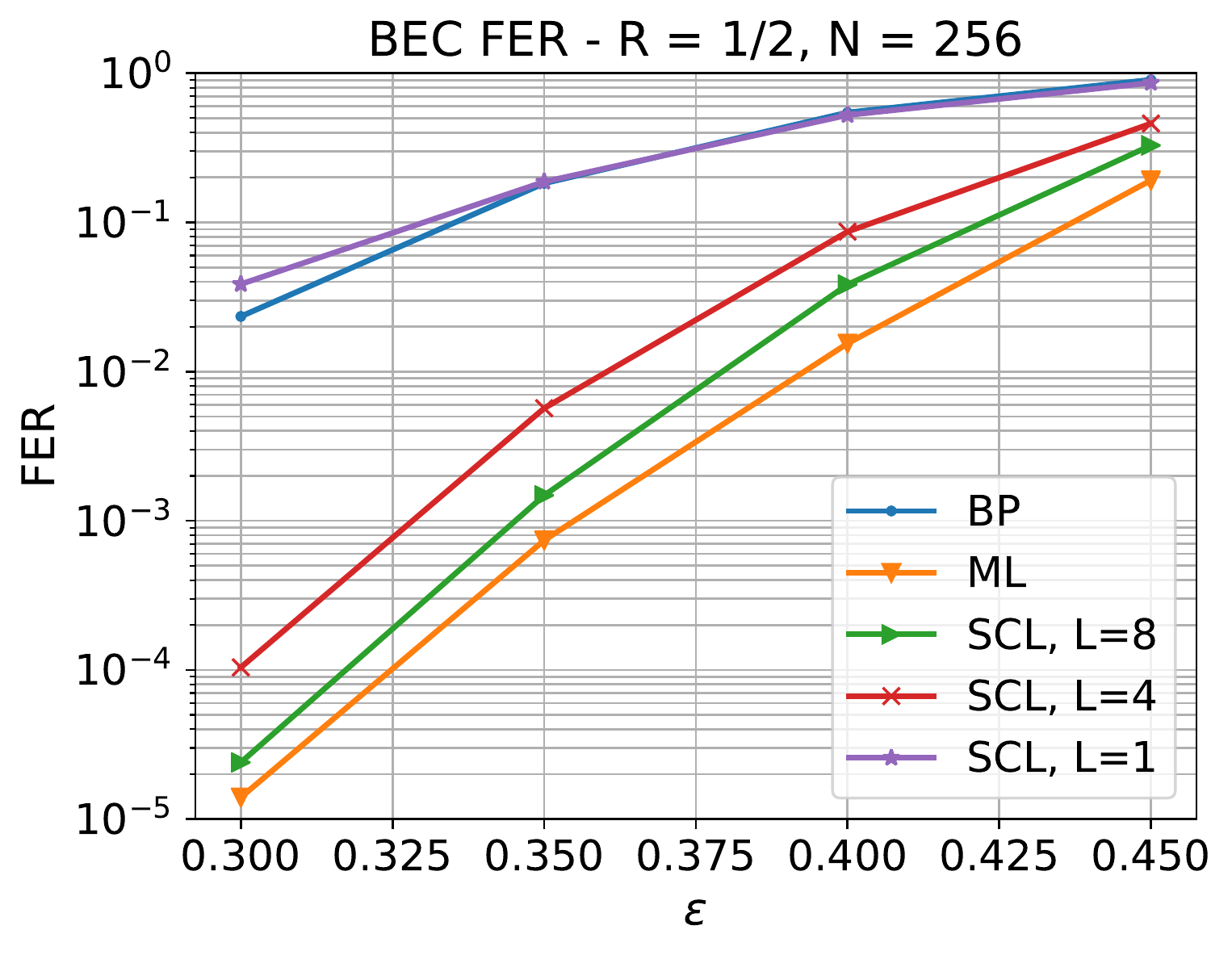}
  \end{subfigure}
  \begin{subfigure}[b]{.49\linewidth}
    \includegraphics[width=\linewidth]{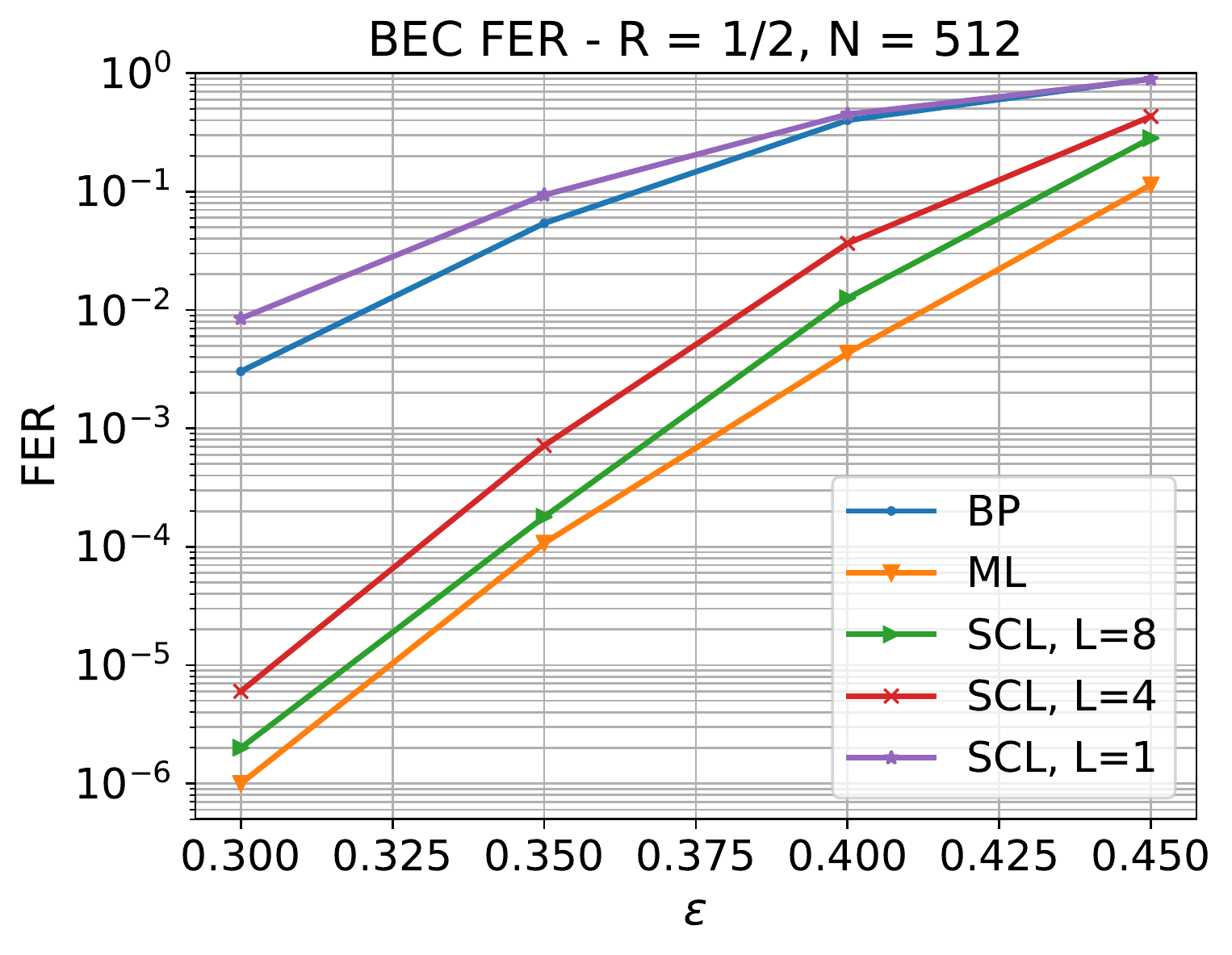}
  \end{subfigure}

  \begin{subfigure}[b]{.49\linewidth}
    \includegraphics[width=\linewidth]{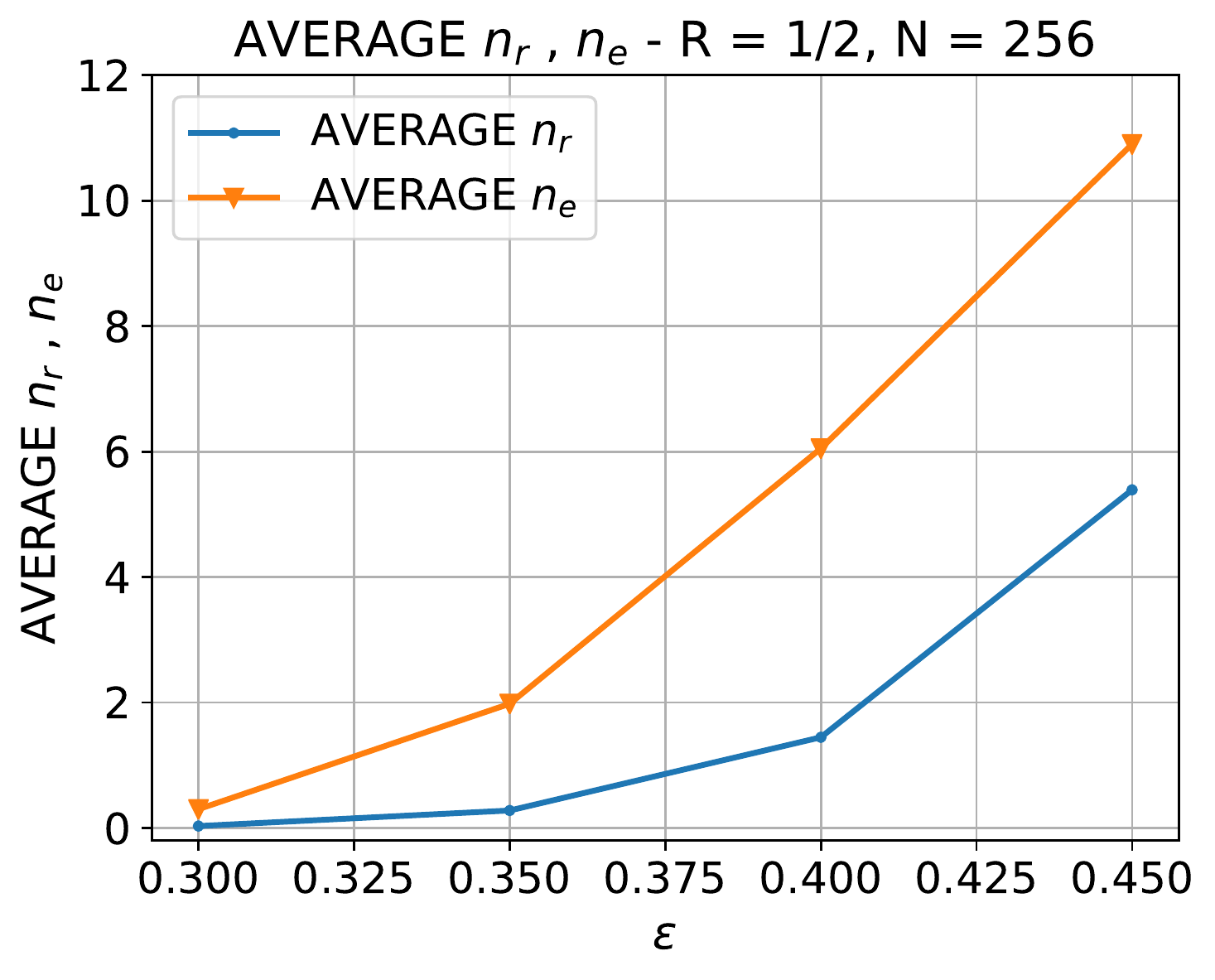}
  \end{subfigure}
  \begin{subfigure}[b]{.49\linewidth}
    \includegraphics[width=\linewidth]{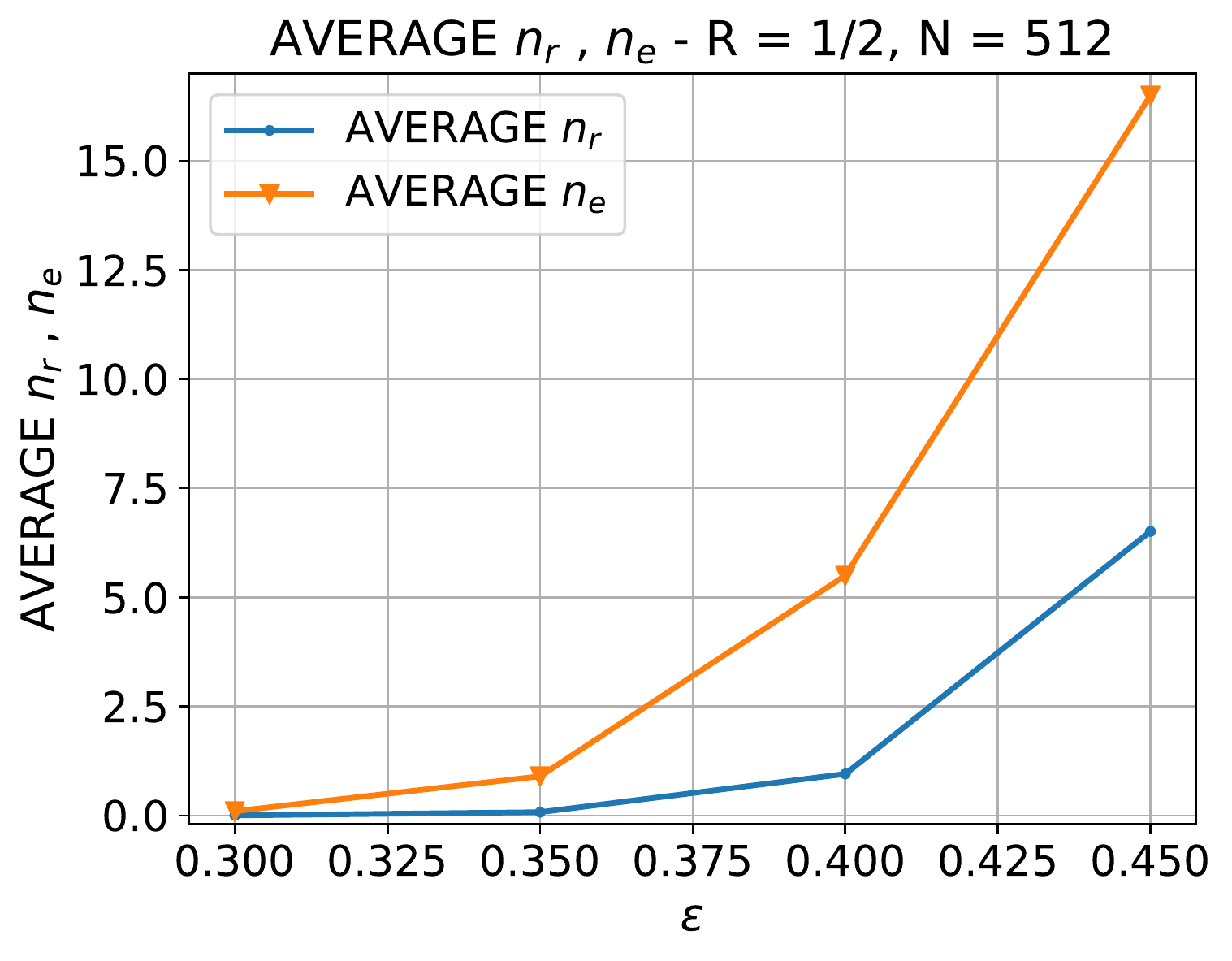}
  \end{subfigure}
  \caption{BER, FER, and average $n_r$ and $n_e$, for CRC concatenated polar codes with rate $1/2$ and different blocklengths.
  }
  \label{fig:BEC_res}
\end{figure}

\section{Parallel Implementation} \label{sec:parImp}
For efficient parallel implementation, we suggest modifying stage 2 in the decoding algorithm in Section \ref{sec:EffML}. This change simplifies stages 3 and 4.
Recall that the goal of stage 2 is to obtain the PCM shown in Fig. \ref{fig:FinalPCM}. The revised algorithm further specifies $\bH^{(1,3)}=I$ ($n_u \times n_u$ diagonal matrix) and $\bH^{(2,3)}=0$.
For that purpose we modify stage 2 by adding the following row elimination step after each diagonal extension step.
For example, suppose that the first diagonal extension step found $l_1$ rows with a single unknown variable node, so that after the first diagonal extension step, the PCM is as shown in Fig. \ref{fig:diag_exten} with $l=l_1$, $\bR=\mathbf{0}$ and $n'_r$ reference variables. At this point, we add a row elimination step: We use row additions (XORs) with the $l=l_1$ rows of the diagonal, in order to zero out the sub-matrix $\bB$ in Fig. \ref{fig:diag_exten} (i.e., after applying these additions, $\bB=\mathbf{0}$).
The point is that these row additions can be executed in parallel.
The same row elimination step is added after each diagonal extension step.

Stages 3 and 4 simplify considerably due to the modification in stage 2.
Using $\bH^{(1,3)} = I$ in \eqref{eq:mat_eq_s1_s2} yields \eqref{eq:u_Ar_b} for $\bA=\bH^{(1,2)}$, $\ba=\bH^{(1,1)}\bd$.
Using $\bH^{(2,3)}=0$ simplifies \eqref{eq:linear_system_Ar_b} to
$\bH^{(2,2)} \br = \bs^{(2)}$.
To solve this system, which is typically small ($n_e \times n_r$), we can use the parallel Gaussian elimination over GF(2) algorithm proposed in \cite{FastGausElim}.

Although $n'_r=1$ minimizes the total number of reference variables, $n_r$, used, for efficient parallel implementation of stage 2 it is beneficial to use $n'_r>1$.

\section*{Acknowledgment}
This research was supported by the Israel Science Foundation (grant no. 1868/18).

\clearpage
\IEEEtriggeratref{21}
%%
%% which triggers a \newpage (i.e., new column) just before the given
%% reference number. Note that you need to adapt this if you modify
%% the paper.  The "triggered" command can be changed if desired:
%%
%\IEEEtriggercmd{\enlargethispage{-20cm}}
%%
%%%%%%

%%%%%%
%% References:
%% We recommend the usage of BibTeX:
%%
%\bibliographystyle{IEEEtran}
%\bibliography{bibliography}

\begin{thebibliography}{10}
	\providecommand{\url}[1]{#1}
	\csname url@samestyle\endcsname
	\providecommand{\newblock}{\relax}
	\providecommand{\bibinfo}[2]{#2}
	\providecommand{\BIBentrySTDinterwordspacing}{\spaceskip=0pt\relax}
	\providecommand{\BIBentryALTinterwordstretchfactor}{4}
	\providecommand{\BIBentryALTinterwordspacing}{\spaceskip=\fontdimen2\font plus
		\BIBentryALTinterwordstretchfactor\fontdimen3\font minus
		\fontdimen4\font\relax}
	\providecommand{\BIBforeignlanguage}[2]{{%
			\expandafter\ifx\csname l@#1\endcsname\relax
			\typeout{** WARNING: IEEEtran.bst: No hyphenation pattern has been}%
			\typeout{** loaded for the language `#1'. Using the pattern for}%
			\typeout{** the default language instead.}%
			\else
			\language=\csname l@#1\endcsname
			\fi
			#2}}
	\providecommand{\BIBdecl}{\relax}
	\BIBdecl
	
	\bibitem{PolarCodes}
	E.~Arikan, ``Channel polarization: A method for constructing capacity-achieving
	codes for symmetric binary-input memoryless channels,'' \emph{IEEE
		Transactions on Information Theory}, vol.~55, no.~7, pp. 3051--3073, Jul
	2009.
	
	\bibitem{SCL}
	I.~{Tal} and A.~{Vardy}, ``List decoding of polar codes,'' \emph{IEEE
		Transactions on Information Theory}, vol.~61, no.~5, pp. 2213--2226, 2015.
	
	\bibitem{alamdar2011simplified}
	A.~Alamdar-Yazdi and F.~R. Kschischang, ``A simplified successive-cancellation
	decoder for polar codes,'' \emph{IEEE communications letters}, vol.~15,
	no.~12, pp. 1378--1380, 2011.
	
	\bibitem{leroux2013semi}
	C.~Leroux, A.~J. Raymond, G.~Sarkis, and W.~J. Gross, ``{A semi-parallel
		successive-cancellation decoder for polar codes},'' \emph{IEEE Transactions
		on Signal Processing}, vol.~61, no.~2, pp. 289--299, 2013.
	
	\bibitem{sarkis2014fast}
	G.~Sarkis, P.~Giard, A.~Vardy, C.~Thibeault, and W.~J. Gross, ``Fast polar
	decoders: Algorithm and implementation,'' \emph{IEEE Journal on Selected
		Areas in Communications}, vol.~32, no.~5, pp. 946--957, 2014.
	
	\bibitem{li2014low}
	B.~Li, H.~Shen, D.~Tse, and W.~Tong, ``{Low-latency polar codes via hybrid
		decoding},'' in \emph{Proc. 8th Int. Symp. Turbo Codes and Iterative Inf.
		Processing (ISTC)}, August 2014, pp. 223--227.
	
	\bibitem{balatsoukas2015llr}
	A.~Balatsoukas-Stimming, M.~B. Parizi, and A.~Burg, ``{LLR}-based successive
	cancellation list decoding of polar codes,'' \emph{IEEE transactions on
		signal processing}, vol.~63, no.~19, pp. 5165--5179, 2015.
	
	\bibitem{yuan2015low}
	B.~Yuan and K.~Parhi, ``{Low-latency successive-cancellation list decoders for
		polar codes with multibit decision},'' \emph{IEEE Trans. Very Large Scale
		Integr. (VLSI) Syst.}, vol.~23, no.~10, pp. 2268--2280, 2015.
	
	\bibitem{xiong2015symbol}
	C.~Xiong, J.~Lin, and Z.~Yan, ``{Symbol-decision successive cancellation list
		decoder for polar codes},'' \emph{IEEE Transactions on Signal Processing},
	vol.~64, no.~3, pp. 675--687, February 2016.
	
	\bibitem{chen2016reduce}
	K.~Chen, B.~Li, H.~Shen, J.~Jin, and D.~Tse, ``Reduce the complexity of list
	decoding of polar codes by tree-pruning,'' \emph{IEEE Communications
		Letters}, vol.~20, no.~2, pp. 204--207, 2016.
	
	\bibitem{hashemi2018decoder}
	S.~A. Hashemi, M.~Mondelli, S.~H. Hassani, C.~Condo, R.~L. Urbanke, and W.~J.
	Gross, ``Decoder partitioning: Towards practical list decoding of polar
	codes,'' \emph{IEEE Transactions on Communications}, vol.~66, no.~9, pp.
	3749--3759, 2018.
	
	\bibitem{hashemi2018decoding}
	S.~A. Hashemi, N.~Doan, M.~Mondelli, and W.~J. Gross, ``Decoding reed-muller
	and polar codes by successive factor graph permutations,'' in \emph{2018 IEEE
		10th International Symposium on Turbo Codes \& Iterative Information
		Processing (ISTC)}.\hskip 1em plus 0.5em minus 0.4em\relax IEEE, 2018, pp.
	1--5.
	
	\bibitem{giard2018fast}
	P.~Giard and A.~Burg, ``Fast-{SSC}-flip decoding of polar codes,'' in
	\emph{2018 IEEE Wireless Communications and Networking Conference Workshops
		(WCNCW)}.\hskip 1em plus 0.5em minus 0.4em\relax IEEE, 2018, pp. 73--77.
	
	\bibitem{hashemi2019rate}
	S.~A. Hashemi, C.~Condo, M.~Mondelli, and W.~J. Gross, ``Rate-flexible fast
	polar decoders,'' \emph{IEEE Transactions on Signal Processing}, vol.~67,
	no.~22, pp. 5689--5701, 2019.
	
	\bibitem{Arkan2010PolarC}
	E.~Arikan, ``Polar codes : A pipelined implementation,'' in \emph{Proc. 4th
		Int. Symp. on Broad. Commun. (ISBC)}, 2010, pp. 11--14.
	
	\bibitem{polar_vs_reed}
	E.~{Arikan}, ``{A performance comparison of polar codes and Reed-Muller
		codes},'' \emph{IEEE Communications Letters}, vol.~12, no.~6, pp. 447--449,
	2008.
	
	\bibitem{eslami2010on}
	A.~Eslami and H.~Pishro-Nik, ``On bit error rate performance of polar codes in
	finite regime,'' in \emph{2010 48th Annual Allerton Conference on
		Communication, Control, and Computing (Allerton)}, 2010, pp. 188--194.
	
	\bibitem{BP_arc}
	B.~{Yuan} and K.~K. {Parhi}, ``{Architecture optimizations for BP polar
		decoders},'' in \emph{2013 IEEE International Conference on Acoustics, Speech
		and Signal Processing}, 2013, pp. 2654--2658.
	
	\bibitem{BP_BEC}
	A.~{Eslami} and H.~{Pishro-Nik}, ``On finite-length performance of polar codes:
	Stopping sets, error floor, and concatenated design,'' \emph{IEEE
		Transactions on Communications}, vol.~61, no.~3, pp. 919--929, 2013.
	
	\bibitem{Polar_LDPC_conc}
	J.~{Guo}, M.~{Qin}, A.~{Guillen i Fabregas}, and P.~H. {Siegel}, ``Enhanced
	belief propagation decoding of polar codes through concatenation,'' in
	\emph{2014 IEEE International Symposium on Information Theory}, 2014, pp.
	2987--2991.
	
	\bibitem{bp_early_term}
	B.~{Yuan} and K.~K. {Parhi}, ``Early stopping criteria for energy-efficient
	low-latency belief-propagation polar code decoders,'' \emph{IEEE Transactions
		on Signal Processing}, vol.~62, no.~24, pp. 6496--6506, 2014.
	
	\bibitem{crc_early_term}
	Y.~{Ren}, C.~{Zhang}, X.~{Liu}, and X.~{You}, ``Efficient early termination
	schemes for belief-propagation decoding of polar codes,'' in \emph{2015 IEEE
		11th International Conference on ASIC (ASICON)}, 2015, pp. 1--4.
	
	\bibitem{PCForChannelSrc}
	N.~Hussami, S.~B. Korada, and R.~Urbanke, ``Performance of polar codes for
	channel and source coding,'' in \emph{IEEE International Symposium on
		Information Theory (ISIT)}, 2009, pp. 1488--1492.
	
	\bibitem{PolarBPCRCWarren}
	N.~Doan, S.~A. Hashemi, E.~N. Mambou, T.~Tonnellier, and W.~J. Gross, ``Neural
	belief propagation decoding of {CRC}-polar concatenated codes,'' in
	\emph{IEEE International Conference on Communications (ICC)}, 2019, pp. 1--6.
	
	\bibitem{BPPermuted}
	A.~Elkelesh, M.~Ebada, S.~Cammerer, and S.~ten Brink, ``Belief propagation
	decoding of polar codes on permuted factor graphs,'' in \emph{IEEE Wireless
		Communications and Networking Conference (WCNC)}, 2018, pp. 1--6.
	
	\bibitem{BPL}
	------, ``Belief propagation list decoding of polar codes,'' \emph{IEEE
		Communications Letters}, vol.~22, no.~8, pp. 1536--1539, Aug 2018.
	
	\bibitem{BPPermutedWarren}
	N.~Doan, S.~A. Hashemi, M.~Mondelli, and W.~J. Gross, ``On the decoding of
	polar codes on permuted factor graphs,'' in \emph{IEEE Global Communications
		Conference (GLOBECOM)}, 2018, pp. 1--6.
	
	\bibitem{yu2019belief}
	Y.~Yu, Z.~Pan, N.~Liu, and X.~You, ``Belief propagation bit-flip decoder for
	polar codes,'' \emph{IEEE Access}, vol.~7, pp. 10\,937--10\,946, 2019.
	
	\bibitem{PolarBPCRCBrink}
	M.~Geiselhart, A.~Elkelesh, M.~Ebada, S.~Cammerer, and S.~ten Brink,
	``{CRC}-aided belief propagation list decoding of polar codes,'' \emph{arXiv
		preprint arXiv:2001.05303}, 2020.
	
	\bibitem{pishro2004on}
	H.~Pishro-Nik and F.~Fekri, ``On decoding of low-density parity-check codes
	over the binary erasure channel,'' \emph{IEEE Transactions on Information
		Theory}, vol.~50, no.~3, pp. 439--454, 2004.
	
	\bibitem{LDPC_ML}
	D.~{Burshtein} and G.~{Miller}, ``Efficient maximum-likelihood decoding of
	{LDPC} codes over the binary erasure channel,'' \emph{IEEE Transactions on
		Information Theory}, vol.~50, no.~11, pp. 2837--2844, 2004.
	
	\bibitem{shokrollahi2006systems}
	A.~Shokrollahi, S.~Lassen, and R.~Karp, ``Systems and processes for decoding
	chain reaction codes through inactivation,'' US patent 6,856,263, 2005.
	
	\bibitem{cocskun2020successive}
	M.~C. Co{\c{s}}kun, J.~Neu, and H.~D. Pfister, ``Successive cancellation
	inactivation decoding for modified {Reed-Muller} and {eBCH} codes,'' in
	\emph{2020 IEEE International Symposium on Information Theory (ISIT)}, 2020,
	pp. 437--442.
	
	\bibitem{algebraic_coding_theory}
	E.~R. Berlekamp, \emph{Algebraic coding theory}, ser. McGraw-Hill series in
	systems science.\hskip 1em plus 0.5em minus 0.4em\relax McGraw-Hill, 1968.
	
	\bibitem{SparseGraphsBPPolar}
	S.~Cammerer, M.~Ebada, A.~Elkelesh, and S.~ten Brink, ``Sparse graphs for
	belief propagation decoding of polar codes,'' in \emph{IEEE International
		Symposium on Information Theory (ISIT)}, 2018, pp. 1465--1469.
	
	\bibitem{LP_Polar_decoding}
	N.~{Goela}, S.~B. {Korada}, and M.~{Gastpar}, ``On {LP} decoding of polar
	codes,'' in \emph{2010 IEEE Information Theory Workshop}, 2010, pp. 1--5.
	
	\bibitem{cover_book}
	T.~M. Cover and J.~A. Thomas, \emph{Elements of Information Theory},
	2nd~ed.\hskip 1em plus 0.5em minus 0.4em\relax New York: Wiley, 2006.
	
	\bibitem{modern_coding_theory}
	T.~Richardson and R.~Urbanke, \emph{Modern Coding Theory}.\hskip 1em plus 0.5em
	minus 0.4em\relax Cambridge University Press, 2008.
	
	\bibitem{EffEncLDPC}
	T.~J. {Richardson} and R.~L. {Urbanke}, ``Efficient encoding of low-density
	parity-check codes,'' \emph{IEEE Transactions on Information Theory},
	vol.~47, no.~2, pp. 638--656, 2001.
	
	\bibitem{arikan2009on}
	E.~Arikan and E.~Telatar, ``On the rate of channel polarization,'' in
	\emph{2009 IEEE International Symposium on Information Theory}, 2009, pp.
	1493--1495.
	
	\bibitem{FastGausElim}
	A.~{Rupp}, J.~{Pelzl}, C.~{Paar}, M.~C. {Mertens}, A.~{Bogdanov}, A.~{Rupp},
	J.~{Pelzl}, C.~{Paar}, M.~C. {Mertens}, and A.~{Bogdanov}, ``A parallel
	hardware architecture for fast {Gaussian} elimination over {GF(2)},'' in
	\emph{2006 14th Annual IEEE Symposium on Field-Programmable Custom Computing
		Machines}, 2006, pp. 237--248.
	
\end{thebibliography}

% Generated by IEEEtran.bst, version: 1.14 (2015/08/26)

\end{document}